\newtheorem{theorem}{Theorem}
\newtheorem{corollary}[theorem]{Corollary}
\newtheorem{lemma}[theorem]{Lemma}
\newtheorem{definition}[theorem]{Definition}
\newtheorem{remark}[theorem]{Remark}
\newtheorem{example}[theorem]{Example}
\begin{document}

\preprint{APS/123-QED}

\title{Local Dimension Invariant Qudit Stabilizer Codes}

\author{Lane G. Gunderman}
 \email{lgunderman@uwaterloo.ca}
\affiliation{%
 The Institute for Quantum Computing
}%
\affiliation{University of Waterloo Department of Physics and Astronomy}

\date{\today}

\begin{abstract}
Protection of quantum information from noise is a massive challenge. One avenue people have begun to explore is reducing the number of particles needing to be protected from noise and instead use systems with more states, so called qudit quantum computers. These systems will require codes which utilize the full computational space. Many prior qudit codes are very restrictive on relations between parameters. In this paper we show that codes for these systems can be derived from already known codes, often relaxing the constraints somewhat, a result which could prove to be very useful for fault-tolerant qudit quantum computers.
\end{abstract}
\maketitle
\section{Background}

The ability to perform classical computation within an arbitrarily small error rate was shown by Shannon in the 40's \cite{shannon}. He provided a theoretical framework showing that modern classical computation would be possible. From that point, there arose a new challenge of finding actual codes that could best implement Shannon's result. This in turn pushed coding theory into a new realm, inspiring codes such as the Hamming code family and BCH codes \cite{bch}, and later leading to incredible ideas such as Polar codes \cite{polarog} and Turbo codes \cite{turboog}.

As computational power progressed, there began to be investigations into the potential power of using quantum phenomena as a computational tool. This brought those same questions explored for classical computers back into question. This led to various ideas to try to bring over classical codes in some form or another. Among some of the earlier ideas was the stabilizer formalism \cite{thesis}, CSS codes \cite{css1}\cite{css2}, and teleportation \cite{tele}. Many classical coding theory methods have been generalized into this new quantum setting, such as Polynomial codes (a generalization of BCH and cyclic codes) \cite{poly1}\cite{poly2}, Polar codes \cite{polar}, and Turbo codes \cite{turbo1}\cite{turbo2}--including results such as a complete list of all perfect codes \cite{perfect}.

Building a quantum computer out of qudits (quantum objects with more than two levels) instead of qubits (quantum objects with only two levels) is an appealing option since such a system would need comparatively few qudits to perform large quantum computations, due to the larger computational space of each particle in such a system. In addition, context being the cause for the magic in quantum computation in the qudit case has been shown, whereas the case for qubits is still an open problem \cite{contextuality}. This has led to the characterization of magic-state distillation regions for qudits as well as fault-tolerant methods for such \cite{qutritmsd}\cite{qutritmsdtight}\cite{quditmsd}.

This means that we also need error-correction methods for these qudit systems. Prior work on qudit codes often depends on having a classical code which satisfies the conditions needed for CSS code construction, or a similar orthogonality requirement (such as \cite{quditgeneral}\cite{quditbch}\cite{quditmds}). This allows for the generation of many qudit quantum codes, however, at times these codes can require very strict relations between the number of bases for the particles, the number of particles, and the number of logical qudits. This can result in these codes being less useful for constructed qudit systems. This work aims to tackle this problem by working to reduce this level of restriction by allowing codes to be used for qudits of a different number of bases than they were initially designed for. In some regards one may consider this a tool somewhat similar in nature to CSS code construction: CSS allows classical to quantum code construction whereas this allows for quantum to quantum code construction. In addition, this work may provide an avenue for determining whether a code is utilizing the qudit space particularly well.

Experimental realizations of qudit quantum computers have been progressing as well as the theory of making such systems \cite{qudition}\cite{quditlight}. As these systems come online and grow there will be a need to have more flexibility in the set of codes that can be used to protect the information in these systems. In this article we primarily explore the ability to apply quantum error-correcting codes in smaller dimensional spaces onto systems with larger alphabets without having to discover codes for those systems through other methods, thus creating extensions of these already known codes into larger spaces.

Before we move on to discussing this problem, we must first define our mathematical language for working on these problems. Following that we introduce our results showing the ability to apply codes in larger spaces then show the condition required for preserving the distance of such codes as well as a region where the distance of these codes can be preserved. We then propose some directions to carry this work.

\section{Definitions}
In this section we define the majority of the tools used in this paper. We recall common definitions and results for qudit operators.

A qubit is defined as a two level system with states $|0\rangle$ and $|1\rangle$. We define a qudit as being a quantum system over $q$ levels, where $q$ is prime.

\begin{definition}
Generalized Paulis for a space over $q$ orthogonal levels, where we assume $q$ is prime, are given by:
\begin{equation}
 X_q|j\rangle=|(j+1)\mod q\rangle,\quad Z_q|j\rangle=\omega^j|j\rangle
\end{equation}
with $\omega=e^{2\pi i/q}$, where $j\in\mathbb{Z}_q$. These Paulis form a group, denoted $\mathbb{P}_q$.
\end{definition}

When $q=2$, these are the standard qubit operators. This group structure is preserved over tensor products since each of these Paulis has order $q$.

\begin{definition}\label{condit}
An $n$-qudit stabilizer $s$ is an $n$-fold tensor of generalized Pauli operators, such that there exists at least one state, $|\psi\rangle$ such that:
\begin{equation}
s|\psi\rangle=|\psi\rangle
\end{equation}
where $|\psi\rangle\in \mathbb{C}^{q^n}$. 
\end{definition}

\begin{definition}
A stabilizer group $\bf{S}$ with commuting \textit{generators} $\{s_i \}$ is defined as the subgroup of all $n$-qudit generalized Paulis formed from all multiplicative compositions ($\circ$) of these generators. This subgroup must not contain a non-trivial multiple of the identity.
\end{definition}

\begin{definition}
We call a basis of orthonormal states $|\psi\rangle$ which satisfy the condition in Definition \ref{condit} for a stabilizer group $\bf{S}$ the codewords of the stabilizer.
\end{definition}

Since each operator has order $q$, a collection of $k$ compositionally independent generators for this stabilizer group will have $q^k$ elements.

Measuring the eigenvalues of the members in our stabilizer group, called the \textit{syndrome}, of our state gives us a way to determine what error might have occurred and then undo the determined error.

We recall for the reader, the well-known result:

\begin{theorem}For any stabilizer code with $k$ qudit stabilizers and $n$ physical qudits, there will be $q^{n-k}$ mutually orthogonal basis stabilizer states, or codewords.\label{words}
\end{theorem}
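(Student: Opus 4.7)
The plan is to exhibit an explicit projector onto the codespace and compute its trace. Since each generator $s_i$ has order $q$, the projector onto its $+1$ eigenspace is $\tfrac{1}{q}\sum_{j=0}^{q-1} s_i^{\,j}$. Because the generators commute (they all lie in $\mathbf{S}$), I would define
\begin{equation}
P \;=\; \prod_{i=1}^{k}\frac{1}{q}\sum_{j=0}^{q-1} s_i^{\,j} \;=\; \frac{1}{q^k}\sum_{s\in \mathbf{S}} s,
\end{equation}
and verify that $P^2=P$ and that its image is exactly the simultaneous $+1$ eigenspace of all generators, i.e.\ the codespace from Definition~\ref{condit}.

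Next I would compute $\dim(\text{codespace})=\operatorname{Tr}(P)$. The key trace fact is that every non-identity element of $\mathbb{P}_q$ is traceless: a single-qudit generalized Pauli $X_q^{a}Z_q^{b}$ has trace zero unless $a=b=0$, because $Z_q$ has eigenvalues summing to $\sum_{j=0}^{q-1}\omega^j=0$ and $X_q^a$ with $a\neq 0$ is a nontrivial permutation matrix with zero diagonal; tensoring preserves tracelessness. Combined with the defining requirement that $\mathbf{S}$ contains no nontrivial multiple of the identity, this forces $\operatorname{Tr}(s)=0$ for every $s\in\mathbf{S}\setminus\{I\}$, so
\begin{equation}
\operatorname{Tr}(P)=\frac{1}{q^k}\operatorname{Tr}(I)=\frac{q^n}{q^k}=q^{n-k}.
\end{equation}

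Finally I would note that $|\mathbf{S}|=q^k$ since the $k$ generators are compositionally independent and each has order $q$ (as remarked in the text just before the theorem statement), which is exactly the normalization used above. The dimension count $q^{n-k}$ then gives a basis of mutually orthogonal codewords by diagonalizing $P$.

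The main obstacle, and the place where I would spend the most care, is justifying the two ingredients that make the trace computation go through: (i) that $|\mathbf{S}|=q^k$ exactly (no collisions among the $q^k$ formal products of generator powers), which uses $q$ prime together with independence; and (ii) that no nonidentity element of $\mathbf{S}$ is a scalar multiple of $I$, so the only surviving term in $\operatorname{Tr}\sum_{s\in\mathbf{S}} s$ is the identity. Both are guaranteed by the stabilizer-group definition, but they are the content of the proof — everything else is the standard projector/trace calculation.
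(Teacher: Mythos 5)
The paper states Theorem~\ref{words} as a recalled ``well-known result'' and supplies no proof of its own, so there is no in-text argument to compare against. Your projector/trace argument is the standard proof and it is correct: writing the codespace projector as the group average $\frac{1}{|\mathbf{S}|}\sum_{s\in\mathbf{S}}s$, noting that every nonidentity generalized Pauli is traceless and that the ``no nontrivial multiple of the identity'' clause in the stabilizer-group definition ensures the identity is the only surviving term, and using $|\mathbf{S}|=q^k$ (which the paper asserts just before the theorem from independence of the generators together with each having order $q$) gives $\operatorname{Tr}P=q^{n}/q^{k}=q^{n-k}$. You correctly isolate the two genuine inputs --- $|\mathbf{S}|=q^k$ with no collisions, and the absence of nontrivial scalars in $\mathbf{S}$ --- as the content of the proof. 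The only point I would make explicit, were this to appear as a written proof, is that $P$ is Hermitian (each factor $\frac{1}{q}\sum_{j}s_i^{\,j}$ is, since $s_i^\dagger=s_i^{-1}=s_i^{\,q-1}$ reindexes the sum), which is what licenses the final step of extracting a mutually orthogonal basis from the image of $P$; and, strictly, one should remark that $s_i^q=I$ exactly --- not merely up to phase --- so that $\frac{1}{q}\sum_j s_i^{\,j}$ really is the projector onto the $+1$ eigenspace, a fact guaranteed here by the paper's Definition~\ref{condit} together with its order-$q$ remark.
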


This differs from the standard convention of $k$ being the number of encoded qudits since throughout this work we focus ourselves on the number of stabilizer generators.

When discussing the errors that occur to our system, the standard choice of the depolarizing channel model focuses on the weights of the errors:

\begin{definition}
The weight of an $n$-qudit operator is given by the number of non-identity operators in it.
\end{definition}

\begin{definition}
A stabilizer code, specified by its stabilizers and stabilizer states, is characterized by a set of values:
\begin{itemize}
\item $n$: the number of qudits that the states are over
\item $n-k$: the number of encoded (logical) qudits, where $k$ is the number of stabilizers
\item $d$ (for non-degenerate codes (where all stabilizer group members have weight at least $d$)): the distance of the code, given by the lowest weight of an undetectable generalized Pauli error (commutes with all stabilizer generators)
\end{itemize}
These values are specified for a particular code as: $[[n, n-k,d]]_q$, where $q$ is the dimension of the qudit space.
\end{definition}

We note that, so long as no ambiguity exists, we suppress $\otimes$. We only include $\otimes$ to make register changes explicit.

Working with tensors of operators can be challenging, and so we make use of the following well-known mapping from these to vectors. This mapping is sometimes referred to as the symplectic representation, but we use alternative notation in this work to provide some notational flexibility utilized in this work. By representing these operators as vectors at times the solution to a problem can become far more tractable.

\begin{definition}[$\phi$ representation of a qudit operator]
We define the surjective map: 
\begin{equation}
\phi_q: \mathbb{P}_q^n\mapsto \mathbb{Z}_q^{2n}
\end{equation}
which carries an $n$-qudit Pauli in $\mathbb{P}_q^n$ to a $2n$ vector mod $q$, where we define this map as:
\begin{multline}
\phi_q(\omega^\alpha \otimes_{i-1} I\otimes X_q^a Z_q^b\otimes_{n-i} I)\\
=( 0^{ i-1}\ a\ 0^{ n-i} | 0^{ i-1}\ b\ 0^{ n-i}).
\end{multline}
Throughout we will assume that $\mathbb{Z}_q$ takes values in $\{0,\ldots,q-1\}$. This mapping is defined as a homomorphism with: $\phi_q(s_1\circ s_2)=\phi_q(s_1)\oplus \phi_q(s_2)$, where $\oplus$ is component-wise addition mod $q$. We denote the first half of the vector as $\phi_{q,x}$ and the second half as $\phi_{q,z}$.
\end{definition}

We may invert the map $\phi_q$ to return to the original $n$-qudit Pauli operator with the global phase being undetermined. We make note of a special case of the $\phi$ representation:

\begin{definition}
Let $q$ be the dimension of the initial system. Then we denote by $\phi_\infty$ the mapping:
\begin{equation}
    \phi_\infty:  \mathbb{P}_q^n\mapsto \mathbb{Z}^{2n}
\end{equation}
where no longer are any operations taken $\mod$ some base, but instead carried over the integers.
\end{definition}

The ability to define $\phi_\infty$ as a homomorphism still (and with the same rule) is a portion of the results of this paper--shown in Theorem \ref{inv}. In general we will write a stabilizer as $\phi_q$, perform some operations, then write it in $\phi_\infty$. We shorten this to write it as $\phi_\infty$, and can later select to write it as $\phi_{q'}$ for some prime $q'$ by taking element-wise $\mod q'$. When we provide no subscript for the representation, that implies that the choice is irrelevant.

The commutator of two operators in this picture is given by the following definition:
\begin{definition}
Let $s_i,s_j$ be two qudit Pauli operators over $q$ bases, then these commute if and only if:
\begin{equation}
\phi_q(s_i)\odot \phi_q(s_j)=0\mod q
\end{equation}
where $\odot$ is the symplectic product, defined by:
\begin{multline}
\phi_q(s_i)\odot \phi_q(s_j)\\ =\oplus_k [\phi_{q,z}(s_j)_k\cdot  \phi_{q,x}(s_i)_k- \phi_{q,x}(s_j)_k \cdot \phi_{q,z}(s_i)_k]
\end{multline}
where $\cdot$ is standard integer multiplication $\mod q$ and $\oplus$ is addition $\mod q$.
\end{definition}

Before finishing, we make a brief list of some possible operations we can perform on our $\phi$ representation for a stabilizer group:
\begin{enumerate}
    \item As remarked above, we may add rows of the stabilizer generator matrix together, which corresponds to composition of operators
    \item We may swap rows, corresponding to permuting the stabilizers
    \item We may multiply each row by any number in $\{1,\ldots,q-1\}$, corresponding to composing a stabilizer with itself. Since all operations are done over a prime number of bases, each number has an inverse.
    \item We may swap registers (qudits) in the following ways:
        \begin{enumerate}
            \item We may swap columns $(Reg\ i,Reg\ i+n)$ and $(Reg\ j,Reg\ j+n)$ for $0<i,j\leq n$, corresponding to relabelling qudits.
            \item We may swap columns $Reg\ i$ and $(-1)\cdot Reg\ i+n$, for $0<i\leq n$, corresponding to conjugating by a Hadamard gate on register $i$ (or Discrete Fourier Transforms in the qudit case \cite{qudit}) thus swapping $X$ and $Z$'s roles on that qudit.
        \end{enumerate}
\end{enumerate}

All of these operations leave all parameters of the code alone, but can be used in proofs. At this point we have all the necessary definitions to prove our results and have a solid base in qudit operators.

\section{Embedding Theorem}

In this section we begin by defining \textit{invariant} codes, which are codes that can be used for systems over any number of bases. Prior to this, only a few examples of invariant codes were known. Then we proceed to show that all qudit codes are invariant codes. This only shows that codes are valid over other spaces, so we then show that at least for sufficiently sized spaces all parameters of the code--particularly the distance--is at least preserved, if not even improved. We provide an argument about when the distance of the code will be improved. We finish by showing how to find the corresponding logical operators for these codes.



\begin{definition}[Invariant codes]
A stabilizer code is invariant iff:
\begin{equation}
    \phi_q(s_i)\odot \phi_q(s_j)=0,\quad \forall i,j
\end{equation}
holds for all primes $q$.
\end{definition}

This is satisfied if $\phi_\infty(s_i)\odot \phi_\infty(s_j)=0$, for all stabilizers $s_i$ and $s_j$ in the stabilizer group $\mathbf{S}$.

\subsection{Motivating Examples}
Consider the following example of generators for a stabilizer group: $\langle XX,ZZ\rangle$. As a qubit code this forms a valid stabilizer code with codeword:
\begin{equation}
    \frac{|00\rangle+|11\rangle}{\sqrt{2}}
\end{equation}
and the commutator of these generators can be seen to be: $(1)+(1)=2\equiv 0\mod 2$. Now suppose we wish to use this code for a qutrit system. In order to do that we must transform these generators into ones which have commutator 0, this can be achieved with $\langle XX^{-1},ZZ\rangle$. In this case $\phi_\infty (X\otimes X^{-1})\odot \phi_\infty(Z\otimes Z)=0$. This means that not only can this be used for qutrits, but for all prime number of bases. The codeword in the qutrit case is:
\begin{equation}
    \frac{|00\rangle+|12\rangle+|21\rangle}{\sqrt{3}}
\end{equation}
and the generalization of this for the codewords of a $q$ level system is a simple extension. We simply make each term in the codeword have the entries sum to a multiple of the qudit dimension so that the $ZZ$ operator has a $+1$ eigenvalue:
\begin{equation}
    \frac{1}{\sqrt{q}}(\sum_{j=1}^q |j\mod q, q-j\mod q\rangle).
\end{equation}
If we look at the generators of this code, there is no single qudit operator that commutes with the generators, thus the distance of this invariant form of the code is still $d=2$.

This is not the only example of a code that can be turned into invariant form. Another great example is the 5-qubit code \cite{5qudit}. In fact, no changes are needed:
\begin{equation}
    \langle XZZXI,\quad IXZZX,\quad XIXZZ,\quad ZXIXZ\rangle.
\end{equation}
From inspection this can be seen to have commutators 0, and so this is a valid stabilizer code for qudits, and it can also be checked that this code will always have distance 3.

It is helpful to have a couple of examples, however, it has been unknown whether it is always possible to put stabilizer codes into an invariant form. We move forward from here to show that this can always be done, and a method of how to do this. 

\subsection{Embedding Theorem Statement and Proof}

We now show that all qudit stabilizer codes can be written in an invariant form\footnote{We acknowledge Andrew Jena for his contributions in the form of the below theorem and corollary.}. This shows that we can form valid stabilizer groups over any number of bases, but says nothing about the distance of these codes. This aspect is treated in the section immediately following.\\

\begin{theorem}\label{inv}
All qudit stabilizer codes can be transformed into invariant codes.
\end{theorem}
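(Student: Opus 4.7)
The plan is to produce, for each stabilizer generator $s_i$, an integer lift $\phi_\infty(s_i) \in \mathbb{Z}^{2n}$ so that the pairwise symplectic products vanish exactly in $\mathbb{Z}$, not merely modulo $q$. First I would bring the $k \times 2n$ stabilizer matrix $G = (G_X \,|\, G_Z)$ into a canonical block form using the operations already introduced (row additions and rescalings, qudit permutations, and single-qudit Hadamards), and then construct the lift block by block.

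Let $r$ denote the rank of $G_X$ over $\mathbb{Z}_q$. Row-reducing $G_X$ and permuting qudits so that its pivot columns come first produces a top-left $r \times r$ identity block with zero bottom rows. Applying Hadamards on the middle block of qudits (columns $r+1$ through $k$) and then row-reducing among the bottom rows normalizes the bottom $Z$-part, putting $G$ into the standard form
\begin{equation}
G = \begin{pmatrix} I_r & A_1 & A_2 & B & 0 & C \\ 0 & 0 & 0 & D & I_{k-r} & E \end{pmatrix},
\end{equation}
where the three column-block widths in each half are $r$, $k-r$, and $n-k$. In this form, the relations $\phi_q(s_i)\odot\phi_q(s_j)\equiv 0 \pmod{q}$ collapse to the two matrix identities
\begin{equation}
B - B^T \equiv A_2 C^T - C A_2^T \pmod{q}, \qquad D^T \equiv -A_1 - A_2 E^T \pmod{q}.
\end{equation}

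I would then lift $A_1, A_2, C, E$ to integer matrices with entries in $\{0,\ldots,q-1\}$, set $D := -(A_1 + A_2 E^T)^T$ over $\mathbb{Z}$ so that the second relation holds exactly, and, writing $N := A_2 C^T - C A_2^T$, fix the diagonal and lower-triangular entries of $B$ to their $\{0,\ldots,q-1\}$ representatives while defining the strict upper-triangular entries by $B_{ij} := B_{ji} + N_{ij}$ over $\mathbb{Z}$ for $i<j$, so that the first relation also holds exactly. The mod-$q$ relations guarantee that these redefined entries remain congruent to the originals modulo $q$, so the same Pauli operators are represented and the stabilizer group is unchanged; a direct block computation then shows every pairwise symplectic product is zero as an integer, establishing invariance. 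The main technical obstacle is justifying the initial reduction to the standard form: since the only allowed column-side operations are qudit permutations and single-qudit Hadamards, one must verify that these, together with row operations, really do suffice to normalize both $G_X$ and the bottom $Z$-part simultaneously; once this reduction is secured, the explicit lift just described completes the proof.
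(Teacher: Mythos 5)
Your argument is correct, and it uses the same basic strategy as the paper — reduce to a canonical form by row operations, qudit permutations, and single-qudit Hadamards, then build an integer lift that agrees with the original entries mod $q$ but makes all pairwise symplectic products vanish exactly — though the two proofs differ in the lifting device. The paper first Hadamards the middle $k-r$ qudits to collapse your two sub-blocks into the coarser form $(I_k\ X_2 \mid Z_1\ Z_2)$, where $Z_1$ is $k\times k$, and then adds a single strictly lower-triangular correction $L$ with $L_{ij} = \phi(s_i)\odot\phi(s_j)$ for $i>j$ to $Z_1$; because the left $X$-block is $I_k$, the cross-term in $\phi(s_i')\odot\phi(s_j')$ picks off exactly $-L_{ij}$, which cancels the original symplectic product in one line, and since every $L_{ij}$ is a multiple of $q$ the stabilizer group is unchanged. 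You instead stay in the finer two-block form, derive the two matrix identities $B - B^T \equiv A_2 C^T - C A_2^T$ and $D^T \equiv -A_1 - A_2 E^T$ mod $q$, and then lift $D$ to the exact integer formula $-(A_1 + A_2 E^T)^T$ while redefining the strict upper triangle of $B$ as $B_{ij} := B_{ji} + N_{ij}$. These two corrections jointly accomplish what the paper's single $L$-correction does in one stroke. The trade-off: the paper's construction is shorter and transfers more cleanly to the logical-operator lemma later in the paper, whereas your block-by-block formulas make the congruence-preservation check completely transparent and give explicit closed forms for the lifted blocks. The gap you flag — verifying that row operations plus qudit permutations plus Hadamards suffice to reach the standard form — is also assumed without proof in the paper (the ``without loss of generality'' step), and is the standard Gottesman normal form argument over $\mathbb{F}_q$, so it is a legitimate step to cite rather than re-prove.
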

\begin{proof}
Let $\{s_1,\dotso,s_k\}$ be a set of stabilizer generators for a qudit code over $q$ levels, with $k \leq n$ and $q$ prime. We must construct a set of stabilizers, $\{s_1',\dotso,s_k'\}$, such that:
\begin{enumerate}
    \item $\phi_\infty(s_i') \equiv \phi_q(s_i) \mod{q}$, for all $i$
    \item $\phi_\infty(s_i') \odot \phi_\infty(s_j') = 0$, for all $i \neq j$.
\end{enumerate}
Without loss of generality, we assume that our stabilizers are given in canonical form:
\begin{equation}
\begin{pmatrix}
\phi(s_1) \\
\vdots \\
\phi(s_k)
\end{pmatrix} = 
\left(
\begin{array}{c|c}
I_{k}\ X_2 & Z_1\ Z_2
\end{array}
\right).
\end{equation}
We define the strictly lower diagonal matrix, $L$, with entries:
\begin{equation}
L_{ij} = \begin{cases}
0 & i \leq j \\
\phi(s_i) \odot \phi(s_j) & i > j
\end{cases}
\end{equation}
and define $s_1',\dotso,s_k'$ such that:
\begin{equation}
\begin{pmatrix}
\phi(s_1') \\
\vdots \\
\phi(s_k')
\end{pmatrix} = 
\left(
\begin{array}{c|c}
I_{k}\ X_2 & Z_1+L\quad Z_2
\end{array}
\right).
\end{equation}
We show that $s_1',\dotso,s_k'$ satisfy the conditions.
\begin{enumerate}
    \item Since $\phi(s_i) \odot \phi(s_j) \equiv 0\mod q$ for all $i \neq j$, we observe that $L_{ij} \equiv 0 \mod q$ for all entries. By adding rows of $L$ to our stabilizers, we have not changed the code modulo $q$.
    \item For $i > j$, we observe that:
    \begin{align*}
        &\phi(s_i') \odot \phi(s_j') \\
        &= (\phi(s_i) + (0\ |\ L_i\ 0)) \odot (\phi(s_j) + (0\ |\ L_j\ 0)) \\
        &= \phi(s_i) \odot \phi(s_j) + \phi(s_i) \odot (0\ |\ L_j\ 0) \\
        &\hspace{1em}+ (0\ |\ L_i\ 0) \odot \phi(s_j) + (0\ |\ L_i\ 0) \odot (0\ |\ L_j\ 0) \\
        &= \phi(s_i) \odot \phi(s_j) + 0 - L_{ij} + 0 \\
        &= 0.
    \end{align*}
\end{enumerate}
\end{proof}
\begin{widetext}
\begin{example}
Consider the 7-qubit Steane code with parameters $[[7,1,3]]_2$, denote it by $\Xi$ \cite{steane}. The $\phi$ representation is given by:
\begin{equation}
\phi_2(\Xi)=\begin{bmatrix}
H & | & 0\\
0 & | & H
\end{bmatrix}
\end{equation}
where $H$ is the parity-check matrix for the classical Hamming code given by:
\begin{equation}
H=\begin{bmatrix}
 1 & 0 & 0 & 1 & 0 & 1 & 1\\
 0 & 1 & 0 & 1 & 1 & 0 & 1\\
 0 & 0 & 1 & 0 & 1 & 1 & 1
\end{bmatrix}.
\end{equation}
We begin by putting this in standard form, performing operations $\mod 2$: 
\begin{equation}
\phi_2(\Xi)=
\setcounter{MaxMatrixCols}{15}
\begin{bmatrix}
 1 & 0 & 0 & 0 & 0 & 0 & 1 & | & 0 & 0 & 0 & 0 & 1 & 1 & 0\\
 0 & 1 & 0 & 0 & 0 & 0 & 0 & | & 0 & 0 & 0 & 1 & 1 & 1 & 0\\
 0 & 0 & 1 & 0 & 0 & 0 & 1 & | & 0 & 0 & 0 & 1 & 0 & 1 & 0\\
 0 & 0 & 0 & 1 & 0 & 0 & 0 & | & 1 & 1 & 0 & 0 & 0 & 0 & 1\\
 0 & 0 & 0 & 0 & 1 & 0 & 0 & | & 0 & 1 & 1 & 0 & 0 & 0 & 1\\
 0 & 0 & 0 & 0 & 0 & 1 & 0 & | & 1 & 1 & 1 & 0 & 0 & 0 & 0\\
\end{bmatrix}.
\end{equation}
 For the following operations, we no longer take our operations over $\mod 2$.

The anti-symmetric matrix $[\odot]$ representing the symplectic inner products between the stabilizers and the resulting $L$ matrix for this code are given below:
\begin{equation}
[\odot]=
\begin{bmatrix}
 0 & 0 & 0 & 2 & 0 & 0\\
 0 & 0 & 0 & 0 & 0 & 0\\
 0 & 0 & 0 & 0 & 2 & 0\\
 -2 & 0 & 0 & 0 & 0 & 0\\
 0 & 0 & -2 & 0 & 0 & 0\\
 0 & 0 & 0 & 0 & 0 & 0
\end{bmatrix}\Rightarrow 
L=
\begin{bmatrix}
 0 & 0 & 0 & 0 & 0 & 0\\
 0 & 0 & 0 & 0 & 0 & 0\\
 0 & 0 & 0 & 0 & 0 & 0\\
 -2 & 0 & 0 & 0 & 0 & 0\\
 0 & 0 & -2 & 0 & 0 & 0\\
 0 & 0 & 0 & 0 & 0 & 0
\end{bmatrix}.
\end{equation}
Adding this to our standard form, we have an invariant form for the Steane code given by:
\begin{equation}
\phi_\infty(\Xi)=
\setcounter{MaxMatrixCols}{15}
\begin{bmatrix}
 1 & 0 & 0 & 0 & 0 & 0 & 1 & | & 0 & 0 & 0 & 0 & 1 & 1 & 0\\
 0 & 1 & 0 & 0 & 0 & 0 & 0 & | & 0 & 0 & 0 & 1 & 1 & 1 & 0\\
 0 & 0 & 1 & 0 & 0 & 0 & 1 & | & 0 & 0 & 0 & 1 & 0 & 1 & 0\\
 0 & 0 & 0 & 1 & 0 & 0 & 0 & | & -1 & 1 & 0 & 0 & 0 & 0 & 1\\
 0 & 0 & 0 & 0 & 1 & 0 & 0 & | & 0 & 1 & -1 & 0 & 0 & 0 & 1\\
 0 & 0 & 0 & 0 & 0 & 1 & 0 & | & 1 & 1 & 1 & 0 & 0 & 0 & 0\\
\end{bmatrix}.
\end{equation}
\end{example}
\end{widetext}
We will want to know the size of the maximal entry in this invariant form for our bound on ensuring the distance of the code is at least preserved. The bound on the maximal entry is provided from the above proof:

\begin{corollary}\label{bound}
The maximal element in $\phi_\infty(S)$, $B$, is upper bounded by:
\begin{equation}
    (2+(n-k)(q-1))(q-1).
\end{equation}
\end{corollary}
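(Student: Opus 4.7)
The plan is to bound $\phi_\infty(S)$ entry by entry, using the explicit form produced by Theorem~\ref{inv}:
\[
\left(\begin{array}{c|c} I_k\ X_2 & Z_1+L \ \ Z_2 \end{array}\right).
\]
The blocks $I_k$, $X_2$, and $Z_2$ are inherited directly from the mod-$q$ canonical form, so each of their entries lies in $\{0,1,\dots,q-1\}$ and is bounded by $q-1$. Hence the only block that can exceed $q-1$ is $Z_1+L$, and the whole task collapses to bounding $|L_{ij}|$ for $i>j$.

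First I would expand $L_{ij}=\phi(s_i)\odot\phi(s_j)$ using the structure of the canonical form. Because the first $k$ X-columns form $I_k$, only two terms in the symplectic sum over those columns survive, contributing $(Z_1)_{j,i}-(Z_1)_{i,j}$; the remaining $n-k$ columns contribute $\sum_{\ell=1}^{n-k}\bigl[(Z_2)_{j,\ell}(X_2)_{i,\ell}-(X_2)_{j,\ell}(Z_2)_{i,\ell}\bigr]$. Every factor lies in $\{0,\dots,q-1\}$, so the identity-block piece has absolute value at most $q-1$ (the difference of two numbers in $[0,q-1]$), while each of the $n-k$ summands of the second piece is a difference of non-negative products in $[-(q-1)^2,(q-1)^2]$. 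Adding the two bounds gives
\[
|L_{ij}|\le(q-1)+(n-k)(q-1)^2=\bigl(1+(n-k)(q-1)\bigr)(q-1).
\]

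Finally, since every entry of $Z_1$ itself lies in $\{0,\dots,q-1\}$, the triangle inequality applied to the combined block gives
\[
|(Z_1+L)_{ij}|\le|(Z_1)_{ij}|+|L_{ij}|\le\bigl(2+(n-k)(q-1)\bigr)(q-1),
\]
which is the claimed upper bound on $B$. All other blocks of $\phi_\infty(S)$ are bounded in absolute value by $q-1$, which is no larger than the quantity above for any $q\ge 2$, so $B$ is indeed controlled by $|Z_1+L|_{\max}$.

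I do not anticipate a genuine obstacle; the argument is bookkeeping on the explicit formula for $L_{ij}$. The small technical points to watch are (i) not to double-count $Z_1$—its entries enter $\phi_\infty(S)$ only once, through the block $Z_1+L$—and (ii) to use the cancellation in $(Z_1)_{j,i}-(Z_1)_{i,j}$ to tighten the identity-block contribution from the naive $2(q-1)$ down to $q-1$, which is exactly what makes the constant in the bound equal to $2$ rather than $3$.
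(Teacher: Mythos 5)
Your proof is correct and follows essentially the same route as the paper's: both bound $|L_{ij}|$ by splitting the symplectic sum into its identity-block part (contributing at most $q-1$) and its $X_2/Z_2$ part (contributing at most $(n-k)(q-1)^2$), then add the extra $q-1$ from $Z_1$ itself. Your write-up is a bit more explicit than the paper's one-line argument about the identity block--the paper speaks loosely of "a single entry in which one is 1," while you correctly note it is the difference $(Z_1)_{j,i}-(Z_1)_{i,j}$ of two numbers in $[0,q-1]$--but the decomposition and resulting bound are identical.
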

\begin{proof}
For any $i \neq j$, there are at most $n-k$ entries in which both $\phi_{q,x}(s_i)$ and $\phi_{q,z}(s_j)$ are non-zero and bounded above by $q-1$, and a single entry in which one is 1 whereas the other is bounded above by $q-1$. This gives us a bound on the inner product of: $(n-k)(q-1)^2 + (q-1)$. This is a bound on the size of an entry in our invariant stabilizer of $q-1 + (n-k)(q-1)^2 + (q-1) = (2+(n-k)(q-1))(q-1)$.
\end{proof}

\begin{widetext}
\begin{example}
In this example we show that CSS codes remain CSS codes under this transformation. Consider a general CSS code given by:
\begin{equation}
\phi(\Xi)=\begin{bmatrix}
I_{k_1} & X_{k_2} & X_{n-(k_1+k_2)} & | & 0 & 0 & 0\\
0 & 0 & 0 & | & Z_{k_1} & I_{k_2} & Z_{n-(k_1+k_2)}
\end{bmatrix}
\end{equation}
where we have put the two block matrices into approximately standard form. Now, we perform Hadamards (or discrete Fourier transforms) on the $k_2$ sized middle blocks. We then have:
\begin{equation}
\phi(\Xi)=\begin{bmatrix}
I_{k_1} & 0 & X_{n-(k_1+k_2)} & | & 0 & X_{k_2} & 0\\
0 & I_{k_2} & 0 & | & Z_{k_1} & 0 & Z_{n-(k_1+k_2)}
\end{bmatrix}.
\end{equation}
Now, we note that the first $k_1$ stabilizers exactly commute with each other, i.e., inner product 0 in the $\phi_\infty$ sense, and likewise for the $k_2$ other stabilizer generators. Now we simply need to consider the case where we pick generators from each of the halves. We consider the matrix $[\odot]$, as above. This has nonzero entries for rows in $k_2$ when the columns are in $k_1$. Likewise for when the rows are in $k_1$, the entries are nonzero for columns in $k_2$. Thus we only add entries to $Z_{k_1}$ and $X_{k_2}$ with $[\odot]$ and, hence, certainly also for our $L$ matrix. In fact, the L matrix adds entries only to $Z_{k_1}$ since it is lower triangular. Given the new invariant form matrix, we may now invert our initial step of applying discrete Fourier transforms and we will still have a CSS code.
\end{example}
\end{widetext}

\subsection{Distance Preserving Condition}

Now that we know that all qudit codes can be put into an invariant form, we now prove that at least for most sizes of the space we can ensure that the distance of the code is at least preserved. We find a cutoff on the number of bases in the underlying space needed to at least preserve the distance.

\begin{theorem}
For all primes $p>p^*$, with $p^*$ a cutoff value greater than $q$, the distance of an embedding of a non-degenerate stabilizer code $[[n,n-k,d]]_q$ into $p$ bases, $[[n,n-k,d']]_p$, has $d'\geq d$.
\end{theorem}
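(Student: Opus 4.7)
The plan is to translate the distance into a linear-algebra question and then exploit the stability of ranks of integer matrices under reduction modulo large primes. Working in the invariant form $\phi_\infty$ guaranteed by Theorem \ref{inv}, a Pauli error $E$ with $\phi$-vector $(a\mid b)$ supported on a subset $S\subseteq\{1,\dots,n\}$ of size $w$ commutes with every stabilizer mod $p$ precisely when $M_S(a_S\mid b_S)^\top\equiv 0\pmod p$, where $M_S$ is the $k\times 2w$ integer submatrix $(Z_S\mid -X_S)$ cut from the invariant stabilizer matrix. Let $d_{\mathbb{F}}$ be the minimum $|S|$ for which $M_S$ has a nontrivial null vector over a field $\mathbb{F}$. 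A support-shrinking argument---any null vector with $(a_i,b_i)=(0,0)$ for some $i\in S$ restricts to a null vector of $M_{S\setminus\{i\}}$---shows that $d_{\mathbb{F}}$ coincides with the minimum weight of a nontrivial commuting Pauli in the $\mathbb{F}$ setting. By non-degeneracy, $d_{\mathbb{Z}_q}=d$; and the distance $d'$ of the embedded $p$-code trivially satisfies $d'\geq d_{\mathbb{Z}_p}$, so it suffices to prove $d_{\mathbb{Z}_p}\geq d$ for all sufficiently large primes $p$.

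The next step is to show $d_{\mathbb{Z}_p}=d_{\mathbb{Q}}$ above some cutoff $p^\ast$. Corollary \ref{bound} bounds every entry of the invariant stabilizer matrix by $B=(2+(n-k)(q-1))(q-1)$, and Hadamard's inequality then bounds any $r\times r$ minor of any $M_S$ by $r^{r/2}B^r$. Let $p^\ast$ be the maximum absolute value across the finitely many nonzero minors of all the $M_S$. For any prime $p>p^\ast$, no nonzero integer minor can vanish mod $p$, so $\mathrm{rank}_{\mathbb{Z}_p}M_S=\mathrm{rank}_{\mathbb{Q}}M_S$ for every $S$; the nullities agree, and hence $d_{\mathbb{Z}_p}=d_{\mathbb{Q}}$.

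It remains to establish $d_{\mathbb{Q}}\geq d$ by a primitivity argument. Given a nontrivial rational null vector of some $M_S$, rescale it to a primitive integer vector $v$ with $\gcd$ of entries equal to $1$; then $M_S v=0$ over $\mathbb{Z}$, so $v\bmod q$ is a null vector of $M_S$ mod $q$, and primitivity ensures $v\not\equiv 0\pmod q$. This $v\bmod q$ corresponds to a nontrivial Pauli of weight at most $|S|$ commuting with every stabilizer mod $q$, so $|S|\geq d_{\mathbb{Z}_q}=d$. Chaining the three inequalities yields $d'\geq d_{\mathbb{Z}_p}=d_{\mathbb{Q}}\geq d$ for every prime $p>p^\ast$. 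The main obstacle in the argument is the bookkeeping identifying \emph{minimum weight of a commuting Pauli} with \emph{minimum support of a null vector of some $M_S$}; this pivots on the support-shrinking step, and once that is in place the rest reduces to the routine observation that rank and nullity of integer matrices are stable under reduction modulo any prime larger than their nonzero minors.
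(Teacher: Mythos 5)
Your proof is correct and rests on the same core mechanism as the paper's: bound the minors of the integer-valued invariant stabilizer matrix via Hadamard's inequality, so that for primes $p$ above that bound no new kernels (the paper's ``artifact errors'') appear at weight below $d$. What you do differently is the framing and one genuine gap you fill. The paper reasons informally about ``unavoidable'' versus ``artifact'' errors, speaks of rearranging rows and columns into $2j\times 2j$ minors, and leaves the claim that the distance over the integers dominates $d$ ($d^*\geq d$) as an unproved Remark. You instead parametrize the minimum distance by the ground field through the $k\times 2w$ submatrices $M_S$, use the support-shrinking observation to identify $d_{\mathbb{F}}$ with the minimum weight of a commuting Pauli, and package the whole argument as the chain $d'\geq d_{\mathbb{Z}_p}=d_{\mathbb{Q}}\geq d$. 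The middle equality is the clean rank-stability fact (ranks of integer matrices are unchanged mod $p$ once $p$ exceeds every nonzero minor), and the final inequality is supplied by your primitivity argument: rescale a rational null vector to a primitive integer one, observe it cannot vanish mod $q$, and conclude it yields a nontrivial commuting Pauli in the original code. That step is exactly what the paper's Remark assumes but does not justify; it is not automatic, since a non-primitive integer null vector could reduce to zero mod $q$. Specializing your Hadamard bound to $r=2(d-1)$ recovers the paper's $p^*=B^{2(d-1)}(2(d-1))^{d-1}$, so the cutoffs agree. In short: same route, but your version is the rigorous one.
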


Before proving this theorem we make a couple of nuanced definitions:
\begin{definition}
An unavoidable error is an error that commutes with all stabilizers and produces the $\vec{0}$ syndrome over the integers.
\end{definition}

These correspond to undetectable errors that would remain undetectable regardless of the number of bases for the code since they always exactly commute under the symplectic inner product with all stabilizer generators--and so all members of the stabilizer group. Since these errors are always undetectable we call them unavoidable errors since changing the number of bases would not allow this code to detect this error. This then provides the following insight:

\begin{remark}
The distance of a code over the integers is given by the minimal weight member in the set of unavoidable errors. The distance over the integers is represented by $d^*$, and so $d^*\geq d$. This value is also the minimum number of columns of the stabilizer generator matrix that are linearly dependent over the integers (or equivalently over the rationals), in the symplectic sense.
\end{remark}

We also define the other possible kind of undetectable error for a given number of bases, which corresponds to the case where some syndromes are multiples of the number of bases:

\begin{definition}
An artifact error is an error that commutes with all stabilizers but produces at least one syndrome that is only zero modulo the base.
\end{definition}

These are named artifact errors as their undetectability is an artifact of the number of bases selected and could become detectable if a different number of bases were used with this code. Each undetectable error is either an unavoidable error or an artifact error. We utilize this fact to show our theorem.

\begin{proof}
The ordering of the stabilizers and the ordering of the registers does not alter the distance of the code. With this, $\phi_\infty$ for the stabilizer generators over the integers can have the rows and columns arbitrarily swapped.

Let us begin with a code over $q$ bases and extend it to $p$ bases. The errors for the original code are the vectors in the kernel of $\phi_q$ for the code. These errors are either unavoidable errors or are artifact errors. We may rearrange the rows and columns so that the stabilizers and registers that generate these entries that are nonzero multiples of $q$ are the upper left $2d\times 2d$ minor, padding with identities if needed. The factor of 2 occurs due to the number of nonzero entries in $\phi_\infty$ being up to double the weight of the Pauli. The stabilizer(s) that generate these multiples of $q$ entries in the syndrome are members of the null space of the minor formed using the corresponding stabilizer(s).

Now, consider the extension of the code to $p$ bases. Building up the qudit Pauli operators by weight $j$, we consider the minors of the matrix composed through all row and column swaps. These minors of size $2j\times 2j$ can have a nontrivial null space in two possible ways:
\begin{itemize}
    \item If the determinant is 0 over the integers then this is either an unavoidable error or an error whose existence did not occur due to the choice of the number of bases.
    \item If the determinant is not 0 over the integers, but takes the value of some multiple of $p$, then it's $0\mod p$ and so a null space exists.
\end{itemize}
Thus we can only introduce artifact errors to decrease the distance. By bounding the determinant by $p^*$, any choice of $p>p^*$ will ensure that the determinant is a unit in $\mathbb{Z}_p$, and hence have a trivial null space since the matrix is invertible.

Now, in order to guarantee that the value of $p$ is at least as large as the determinant, we can use Hadamard's inequality to obtain:
\begin{equation}
    p> p^* =B^{2(d-1)}(2(d-1))^{(d-1)}
\end{equation}
where $B$ is the maximal entry in $\phi_\infty$. Since we only need to ensure that the artifact induced null space is trivial for Paulis with weight less than $d$, we used this identity with $2(d-1)\times 2(d-1)$ matrices.

When $j=d$, we can either encounter an unavoidable error, in which case the distance of the code is $d$ or we could obtain an artifact error, also causing the distance to be $d$. It is possible that neither of these occur at $j=d$, in which case the distance becomes some $d'$ with $d<d'\leq d^*$. 
\end{proof}

\begin{example}
In our example of the Steane code, we have $B=1$ and $d=3$, so for all primes larger than $1^{2\cdot 2}(2\cdot 2)^2=16$ we are guaranteed that the distance is preserved. For primes below that value, we can manually check and apply alternate manipulations if needed. Given that all entries are $\pm 1$, we know that the determinant of all the minors is bounded by $4$, all primes at least as large as $5$ preserve the distance. Through manual checking 3 also is not a possible minor determinant, so all primes preserve the distance for our invariant form of the Steane code.
\end{example}

We alluded prior to this proof that the code over the integers has distance at least as large. To determine how many bases are needed to ensure we have distance $d^*$, we simply extend our above result to obtain the cutoff expression, whereby no further distance improvements can be obtained from embedding the code--suggesting that another code ought to be used.

\begin{corollary}
For a non-degenerate stabilizer code we obtain the integer distance $d^*$ when:
\begin{equation}
    p> B^{2(d^*-1)} (2(d^*-1))^{d^*-1}.
\end{equation}
After this value the distance cannot be improved through embedding. If $d^*$ is unknown, this can be upper bounded by using $k$ in place of $d^*$.
\end{corollary}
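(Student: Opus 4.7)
The plan is to reuse the argument of the preceding theorem almost verbatim, with $d$ replaced by the integer distance $d^*$, and then add one short observation to promote the one-sided inequality to equality.

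First, I would rerun the minor-by-minor counting from the previous proof at weight $j = d^* - 1$: an artifact error of weight $j$ arises exactly when a $2j \times 2j$ minor of $\phi_\infty$ (after an appropriate row/column rearrangement) has determinant nonzero over $\mathbb{Z}$ but divisible by $p$. Applying Hadamard's inequality to the largest such minor that could threaten the distance, namely one of size $2(d^*-1) \times 2(d^*-1)$, bounds the absolute value of its determinant above by $B^{2(d^*-1)}(2(d^*-1))^{d^*-1}$. Choosing $p$ strictly greater forces every such determinant to be a unit modulo $p$, so no artifact error of weight less than $d^*$ can occur in the $p$-ary embedding. Combined with the fact that every unavoidable error has weight at least $d^*$ by the definition of $d^*$, this yields $d' \geq d^*$.

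Second, I would supply the new ingredient: $d' \leq d^*$ holds for every prime $p$. Indeed, an unavoidable error of weight $d^*$ pairs symplectically to $0$ with every stabilizer over $\mathbb{Z}$, hence also modulo any $p$, so it remains undetectable in the embedded code regardless of how large $p$ is taken. This both closes the bound to $d' = d^*$ and immediately justifies the second sentence of the corollary: once the cutoff is reached, no further enlargement of $p$ can improve the distance, since the obstruction to exceeding $d^*$ is intrinsic to the invariant form.

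For the final sentence, I would observe that the expression $B^{2(m-1)}(2(m-1))^{m-1}$ is monotonically non-decreasing in $m$, so any upper bound on $d^*$ may be substituted. The quantum Singleton bound, applied with $k$ stabilizer generators and $n-k$ logical qudits, gives $d^* \leq k/2 + 1 \leq k$ for $k \geq 2$ (the $k \leq 1$ case being trivial), so substituting $k$ in place of $d^*$ yields a valid, computable cutoff even when $d^*$ has not been pinned down. No serious obstacle is expected; the statement is essentially a packaged reapplication of the preceding theorem, and the only non-mechanical step is the observation that unavoidable errors remain unavoidable under every base.
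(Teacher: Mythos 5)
Your proposal is correct and follows essentially the same route as the paper: reapply the preceding theorem's Hadamard-inequality bound with $d$ replaced by $d^*$, and substitute $k$ for $d^*$ when the latter is unknown. You helpfully spell out two points the paper leaves tacit---that unavoidable errors vanish symplectically over $\mathbb{Z}$ and hence cap $d'$ at $d^*$ for every $p$, and that $d^*\leq k$ follows from the quantum Singleton bound---but these are elaborations of, not departures from, the paper's one-line argument.
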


\begin{proof}
This follows from the above proof. The looser bound comes from $d^*\leq k$, so we can evaluate this at $d^*=k$ to obtain the loosest condition.
\end{proof}


The above provides a condition on the number of bases needed to ensure the distance of the code is at least preserved, but one could also ask, given an invariant code, whether that code can be used over fewer bases. We provide a bound on this with the following:
\begin{lemma}
For a non-degenerate code, for all $p<p^{**}$, with $p^{**}$ a cutoff value less than $q$ (possibly $\leq 2$), the distance of $[[n,n-k,d]]_q$ over $p$ bases, $[[n,n-k,d']]_p$, must have $d'<d$.
\end{lemma}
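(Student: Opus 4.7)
The plan is to mirror the preceding theorem's framework in the converse direction. As in that proof, having $d' < d$ over $p$ bases requires the appearance of an artifact error of weight strictly less than $d$: an unavoidable error of weight $<d$ cannot exist, since the integer distance satisfies $d^*\geq d$ for a non-degenerate code. Such an artifact error of weight $j$ corresponds to a $2j\times 2j$ minor of $\phi_\infty$ whose integer determinant is nonzero but is divisible by $p$.

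First, I would enumerate, for each $1 \leq j \leq d-1$ and each admissible choice of $2j$ rows and columns (reusing the row/column swap trick from the preceding proof, padding with trivial stabilizers whenever $k < 2j$), all $2j \times 2j$ minors of $\phi_\infty$ that have nonzero integer determinant. Let $\mathcal{D}$ denote the finite multiset of these integer determinants and $\mathcal{P}$ the set of primes dividing at least one element of $\mathcal{D}$.

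Next, I would set $p^{**}$ to be the smallest prime not in $\mathcal{P}$. By minimality, every prime $p < p^{**}$ lies in $\mathcal{P}$, so $p$ divides some minor's integer determinant; that minor becomes singular modulo $p$, producing an artifact error of weight at most $d-1$ and hence $d' < d$, as required. To confirm $p^{**} \leq q$, note that the original code has distance $d$ over $q$, which forces $q$ to divide none of the determinants in $\mathcal{D}$; hence $q\notin\mathcal{P}$ and the smallest missing prime is at most $q$. When $\mathcal{P}$ already excludes $2$, one simply has $p^{**}=2$ and the statement is vacuous, matching the ``possibly $\leq 2$'' caveat in the lemma.

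The main obstacle I expect is the bookkeeping tying ``$2j\times 2j$ minor singular mod $p$'' to the existence of a genuine weight-$j$ Pauli artifact error: one must ensure that the chosen rows and columns correspond to an error supported on exactly $j$ registers rather than a spurious byproduct of how the minor was selected, and the $k<2j$ regime requires the same padding convention as the preceding proof. With that correspondence established, the remainder is a clean pigeonhole argument over the prime divisors of a finite list of integers.
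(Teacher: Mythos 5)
Your proposed route is fundamentally different from the paper's, and it does not go through. The paper proves this lemma with a pure counting argument, not with any determinant analysis. Setting $t=\lfloor (d-1)/2\rfloor$, it invokes the nondegenerate qudit quantum Hamming bound $\sum_{j=0}^{t}\binom{n}{j}(p^2-1)^j\le p^k$, lower-bounds the left side by the $j=t$ term, upper-bounds the right side by $(p^2-1)^k$ (valid once $p\ge 2$), and solves to get the explicit, code-structure-independent cutoff $p^{**}=\sqrt{1+\binom{n}{t}^{1/(k-t)}}$. For $p<p^{**}$ the Hamming bound is violated, so no code (embedded or otherwise) with those $n,k,d$ can exist over $p$ bases, forcing $d'<d$.

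Your argument has two genuine gaps. First, the implication ``a single $2j\times 2j$ minor of $\phi_\infty$ becomes singular mod $p$ $\Rightarrow$ there is an artifact error of weight $\le j$'' is false in the direction you need it. A weight-$j$ error is undetectable over $p$ precisely when the full $k\times 2j$ syndrome map restricted to its support drops rank mod $p$, which requires \emph{every} $2j\times 2j$ minor obtained by choosing $2j$ of the $k$ rows to vanish simultaneously, not just one. (The preceding theorem's proof only uses the correct contrapositive direction: if some minor is a unit mod $p$, the restricted kernel is trivial.) Second, your justification that $q\notin\mathcal{P}$ --- and hence $p^{**}\le q$ --- does not follow from the code having distance $d$ over $q$. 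That hypothesis only guarantees, for each potential support, that \emph{at least one} choice of $2j$ rows gives a unit minor mod $q$; other minors for the same support or for different row choices can perfectly well be $\equiv 0\pmod q$, so $q$ may divide plenty of entries of $\mathcal{D}$. Without both of these steps, your pigeonhole conclusion collapses. By contrast, the paper's Hamming-bound argument sidesteps all of this bookkeeping and gives a closed-form $p^{**}$ that depends only on $n,k,d$.
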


\begin{proof}
Let $t=\lfloor \frac{d-1}{2}\rfloor$. The qudit quantum Hamming bound requires the initial code to satisfy:
\begin{equation}
    \sum_{j=0}^t {n\choose j}(q^2-1)^j\leq q^k.
\end{equation}
Now we consider applying the code over $p$ levels. Then we may bound:
\begin{equation}
    {n\choose t}(p^2-1)^t\leq \sum_{j=0}^t {n\choose j}(p^2-1)^j.
\end{equation}
Likewise, when $p\geq 2$ we may bound:
\begin{equation}
    p^k\leq (p^2-1)^k.
\end{equation}
Combining these we have:
\begin{equation}
    {n\choose t}(p^2-1)^t\leq (p^2-1)^k.
\end{equation}
Then we violate the initial inequality if:
\begin{equation}
    p< \sqrt{1+{n\choose t}^{1/(k-t)}}=p^{**}
\end{equation}
This means that $p^{**}$ is only a valid bound when it is larger than 2, otherwise this result is trivially true since we no longer have a quantum code.
\end{proof}

Combining these results mean that distance \textit{may} be preserved for $p^{**}\leq p< p^*$, while for $p>p^*$ it is guaranteed to have the distance preserved. For the region of values of $p$ where the distance might be preserved, one can manually check and attempt another invariant form to try to make the distance preserved for the desired number of bases.

\subsection{Invariant Logical Operators}

Besides from the stabilizers we also need logical operators to perform computations over the encoded qudits. Now we show how to construct such invariant logical operators.

\begin{lemma}
We may define invariant logical operators, $\mathcal{L}_\infty$, for the stabilizer code $\mathbf{S}$ as well.
\end{lemma}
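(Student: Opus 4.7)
The plan is to extend the construction of Theorem \ref{inv} from the stabilizer generators to a full symplectic basis of $\mathbb{Z}^{2n}$ that additionally contains the logicals. After Theorem \ref{inv} I have invariant generators $s_1',\dotso,s_k'$ with pairwise vanishing $\phi_\infty$ symplectic products. Using the standard canonical-form procedure for the original $q$-ary code I pick logical representatives $\{\bar X_i,\bar Z_i\}_{i=1}^{n-k}$ satisfying $\phi_q(\bar X_i)\odot\phi_q(\bar Z_j)=\delta_{ij}$, $\phi_q(\bar X_i)\odot\phi_q(\bar X_j)=\phi_q(\bar Z_i)\odot\phi_q(\bar Z_j)=0$, and zero symplectic product with every $s_\ell'$. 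Lifting each vector entrywise from $\mathbb{Z}_q$ to $\mathbb{Z}$ then yields tentative $\phi_\infty$ representatives.

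Over the integers every required inner product agrees with its target ($0$, or $\delta_{ij}$ for conjugate pairs) only modulo $q$, so each product differs from its target by an integer multiple of $q$. To clear these defects I run the same $L$-matrix correction as in Theorem \ref{inv}, but now on the stacked matrix of all stabilizers and all logicals. Reading the rows top to bottom, for each row I compute the integer symplectic product with every earlier row, then cancel each defect by adding the appropriate multiple of $q$ to a single coordinate of the current row, namely the coordinate where the earlier row has its canonical pivot in the dual half of the symplectic form. Because adding $q$ to any entry does not change its $\phi_q$-image, the corrected vectors still reduce modulo $q$ to the original stabilizers and logicals, and a subsequent reduction modulo any prime $p$ gives a valid set of logicals for the $p$-ary embedding.

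The main obstacle is that a correction made to fix one inner product could, in principle, disturb another inner product that has already been set. This is averted by the same strict triangularity that makes the $L$-matrix argument of Theorem \ref{inv} work: every correction is written into a coordinate that is a pivot of exactly one previously-processed row, so it alters that single symplectic product and nothing else. Ordering the processing as stabilizers first, then $\bar Z_1,\dotso,\bar Z_{n-k}$, then $\bar X_1,\dotso,\bar X_{n-k}$, and keeping the canonical pivot structure inherited from the symplectic Gauss--Jordan reduction, the procedure terminates and produces an integer symplectic basis whose logical part is the desired $\mathcal{L}_\infty$.
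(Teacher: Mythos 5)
Your high-level plan is reasonable—stack all stabilizers and logicals and clear the integer-valued defects one row at a time—but the crucial step that makes this safe is missing, and the justification you give for safety is incorrect as stated.

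The issue is your claim that ``every correction is written into a coordinate that is a pivot of exactly one previously-processed row, so it alters that single symplectic product and nothing else.'' Consider processing $\bar X_i$ and fixing its defect against $\bar Z_j$, whose pivot sits at $Z$-position $k+j$; the correction you propose goes into the $X$-part of $\bar X_i$ at position $k+j$. Adding $c$ there changes $\phi_\infty(\bar X_i)\odot\phi_\infty(r')$ by $c\cdot\phi_{\infty,z}(r')_{k+j}$ for \emph{every} earlier row $r'$, not just for the row whose pivot is there. In the canonical form $[\,I_k\ A\ |\ B\ C\,]$ the stabilizers generically carry a nonzero $C$-block precisely at $Z$-positions $k{+}1,\ldots,n$, so those entries get dragged along. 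The strict triangularity of Theorem~\ref{inv} works because there every correction lives in the $Z$-part of the first $k$ columns, where the only nonzero $X$-entry belongs to a single stabilizer (the $I_k$ block). Once you also need to correct at positions $k{+}1,\ldots,n$, that uniqueness fails, and your argument breaks. (A careful ordering of the defects \emph{within} each row can be made to work, but it contradicts the stated ``alters nothing else'' reasoning, and you would still need to prove the requisite zero-block structure of the logical rows.)

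The paper's proof avoids the difficulty altogether by not attempting to repair logical--logical defects with the $L$-matrix mechanism. After bringing $\mathbf{S},L_X,L_Z$ into the form
\[
\begin{bmatrix}
I_k & A & | & B & C\\
0 & I_{n-k} & | & E' & F'\\
0 & G' & | & H' & I_{n-k}
\end{bmatrix},
\]
one performs a further symplectic reduction so that, restricted to the last $n-k$ qudits, the logical block becomes $\left[\begin{smallmatrix} I_{n-k} & 0\\ 0 & I_{n-k}\end{smallmatrix}\right]$. Since the $X$-part of every logical vanishes on the first $k$ coordinates, the symplectic products among the logicals are then computed entirely on this block and are \emph{exactly} $0$ or $\delta_{ij}$ over the integers, not merely modulo $q$. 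The only defects left to repair are logical-versus-stabilizer ones, and those corrections live only in $Z$-positions $1,\ldots,k$ (updating $E'$ and $H'$), where the only nonzero $X$-entry in any row is a stabilizer pivot. That is why ``apply Theorem~\ref{inv} to each logical operator appended to $\phi(\mathbf{S})$'' is safe there. To complete your argument you would need to add that prior symplectic-standard-form step for the logical block, and then restrict your corrections to the first $k$ $Z$-columns exactly as the paper does.
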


\begin{proof}
Each logical operator is in $N(\mathbf{S})/\mathbf{S}$, the normalizer of $\mathbf{S}$ excluding $\mathbf{S}$, and there are $n-k$ $X$ logical operators and $n-k$ $Z$ logical operators. This means that we could, if we desired, generate a code $\mathbf{S}'$ whose generators are $\mathbf{S}\cup L_X$. This will have rank $n$ and can be written in standard form as:
\begin{equation}
    \begin{bmatrix}
      I_n | *
    \end{bmatrix}
\end{equation}
meaning that $L_X$ may be diagonalized within the last $n-k$ qudits. This can also be done with $L_Z$.

Then, since these logical operators are compositionally independent, they must be linearly independent in the $\phi$ representation, meaning $rank(L_X\cup L_Z)=2(n-k)$. Now, if we take the standard form for $\mathbf{S}$ and append $L_X,L_Z$ as additional rows we have:
\begin{equation}
\begin{bmatrix}
  \mathbf{S}\\
  L_X\\
  L_Z
\end{bmatrix}=
    \begin{bmatrix}
       I_k & A & | & B & C\\
       0 & D & | & E & F\\
       0 & G & | & H & J
    \end{bmatrix}
\end{equation}
From the above observation it is possible to compose the generators for $L_X,L_Z$ to generate the matrix:
\begin{equation}
    \begin{bmatrix}
     I_k & A & | & B & C\\
     0 & I_{n-k} & | & E' & F'\\
     0 & G' & | & H' & I_{n-k}
    \end{bmatrix}
\end{equation}
At this point we focus on fixing the commutators between elements of $L_X$ and $L_Z$. Since the first $k$ qudits will always contribute 0 to the commutator we drop those columns:
\begin{equation}
    \begin{bmatrix}
      I_{n-k} & |  & F'\\
      G' & |  & I_{n-k}
    \end{bmatrix}
\end{equation}
We can further reduce this to:
\begin{equation}
    \begin{bmatrix}
      I_{n-k} & | & 0\\
      0 & | & I_{n-k}
    \end{bmatrix}
\end{equation}
This trivially satisfies the required relations:
\begin{equation}
    \phi_q(\bar{X}_i)\odot \phi_q(\bar{Z}_j)=\delta_{ij}
\end{equation}
\begin{equation}
    \phi_q(\bar{X}_i)\odot \phi_q(\bar{X}_j)=\phi_q(\bar{Z}_i)\odot \phi_q(\bar{Z}_j)=0,\quad \forall i,j
\end{equation}
Throughout these computations we have updated $E'$ and $H'$. We now simply apply Theorem \ref{inv} to each logical operator in turn appended to $\phi(\mathbf{S})$.
\end{proof}

\begin{remark}
This process does not alter our invariant stabilizer form, so our bound from earlier still holds.
\end{remark}

\section{Conclusion and Discussion}

This work introduces and lays the groundwork for qudit codes that can be used on systems with local dimension different than initially designed. This helps ease the restrictions that some qudit codes suffer from. We showed one method for generating these invariant codes, but bring up the following example to motivate additional work on this:

\begin{example}

Throughout we have considered the method of creating invariant codes given by Theorem \ref{inv}. With the following simple example we can show that $p^*=q$ for this method is not always possible. Consider the $[[4,2,2]]_2$ code generated by:
\begin{equation}
    \Xi=\langle XZXX,ZXZZ\rangle
\end{equation}
Following the method prescribed we obtain:
\begin{equation}
\phi_\infty(\Xi)=\begin{bmatrix}
1 & 0 & 1 & 1  & | & 0 & 1 & 0 & 0 \\
0 & 1 & 0 & 0  & | & -3 & 0 & 1 & 1 \\
\end{bmatrix}
\end{equation}
This means that if we were to use this as a qutrit code the distance will drop to 1.
This cannot be resolved by changing the choices of generators through compositions. If, however, we select as our generators:
\begin{equation}
    \Xi'=\langle XZXX,ZXZZ^{-1}\rangle
\end{equation}
then $\Xi'$ is an invariant code and the distance of this code remains 2. Determining whether such a modification is always possible, and whether it's possible to achieve this with a simple procedure, are other open problems.
\end{example}

In this paper we have shown that qudit codes can be embedded into larger spaces, and at least for sufficiently large number of bases, all parameters of the code are at least preserved. This result provides another tool for error-correction schemes for qudit quantum computers by providing immediate codes for these devices using modifications of already known codes.

Although in this work we find some critical value, $p^*$, above which all primes preserve the distance of the code, we believe that this result carries to all primes at least as large as the initial dimension if one uses other procedures to make the code invariant. Proving this, or at least tightening the bound on the critical value, seems like an important extension of this result, since the current bound can be quite large. In addition, there is the question of whether these results also hold for degenerate codes.

Some additional directions to carry these results include the following. Determining whether the prescribed method for generating invariant codes, or some other method, allow for transversality preservation--a crucial tool in fault-tolerant quantum computation. We also ask whether it is possible to take codes already known over $q$ levels, and not a perfect code, and preserve the distance while using the code over $p<q$ levels. 

\section*{Acknowledgments}

We would like to thank David Cory for useful suggestions. We also thank Daniel Gottesman for reading over an earlier draft and providing some useful directions and caveats to consider. We also thank Andrew Jena for his help in proving the aforementioned theorem and corollary.

\section*{Funding}
We gratefully thank the financial contributions of the Canada First Research Excellence Fund, Industry Canada, CERC (215284), NSERC (RGPIN-418579), CIFAR, and the Province of Ontario.

\bibliographystyle{unsrt}
\phantomsection  
\renewcommand*{\bibname}{References}

\bibliography{main}

\end{document}